\documentclass{article}
\usepackage{enumitem}
\usepackage{tikz}
\usetikzlibrary{matrix,arrows,calc,automata}
\usepackage{pgfplots}
\pgfplotsset{compat=newest}
\usepackage{booktabs}
\usepackage{color}
\usepackage[cmex10]{amsmath}
\usepackage{cleveref}
\usepackage{amssymb,euscript,psfrag,latexsym,graphicx}
\usepackage{bbm,color,amstext,wasysym,cuted,mathtools, cite}
\usepackage{algpseudocode}


\raggedbottom%
\newtheorem{theorem}{Theorem}
\newtheorem{definition}{Definition}
\newtheorem{lemma}{Lemma}
\newtheorem{proof}{Proof}

\newtheorem{remark}[theorem]{Remark}
\newtheorem{property}{Property}

\begin{document}

\title{Evaluation of the Region of Attractions of Higher Dimensional Hyperbolic Systems using the Extended Dynamic Mode Decomposition}

\date{}
\author{{\small C. Garcia-Tenorio$^{*,1,2}$, D. Tellez-Castro$^{1}$, E. Mojica-Nava$^{1}$, A. Vande Wouwer$^{2}$}\thanks{Camilo.garciatenorio@umons.ac.be}\\
{\footnotesize $^1$, Universidad Nacional de Colombia, Carrera 30 No. 45--03, Bogotá, Colombia}\\
{\footnotesize $^2$ Systems, Estimation, Control, and Optimization (SECO), Universit\'e de Mons,7000 Mons, Belgium}}

\maketitle
\abstract{%
   This paper proposes an original methodology to compute the regions of attraction in hyperbolic and polynomial nonlinear dynamical systems using the eigenfunctions of the discrete-time approximation of the Koopman operator given by the extended dynamic mode decomposition algorithm. The proposed method relies on the spectral decomposition of the Koopman operator to build eigenfunctions that capture the boundary of the region of attraction. The algorithm relies solely on data that can be collected in experimental studies and does not require a mathematical model of the system. Two examples of dynamical systems, a population model and a higher dimensional chemical reaction system, allows demonstrating the reliability of the results.
}
\footnotetext{\textbf{Abbreviations:} EDMD, extended dynamic mode decomposition; ROA, region of attraction}

\section{Introduction}%
\label{sec:introduction}
The analysis of nonlinear systems often focuses on the stability of the fixed point or equilibrium point of the system, especially the global stability of this unique point. When the fixed point is not unique, the concept of the region of attraction (ROA) is as important as the stability of the point. In general terms, the ROA represents the extent to which a disturbance can drive the system away from a stable equilibrium point so that it can still return to it. In other words, the ROA indicates from which initial conditions the system converges to a stable point. This analysis is particularly useful in biological or ecological systems, where there are several equilibrium points in which one or all the species vanish, and some points where the species coexist~\cite{Chellaboina2009a,Bomze1983}. 

There are some traditional techniques to approximate the ROA of asymptotically stable equilibrium points, such as the computation of the level sets of a local Lyapunov function~\cite{Khalil2015}, the backward integration of systems from saddle equilibrium points to approximate the boundary of the ROA~\cite{chiang2015stability}, or the computation of the level sets of a local energy function, similar to the Lyapunov approach. As local energy functions are constant along the trajectories of the system, they can provide the stable manifold of saddle points in the boundary. Consequently, these stable manifolds form the whole boundary of the ROA~\cite{chiang2015stability}. Among these techniques, integrating the system back from the saddle points in the boundary does not require the knowledge of the local Lyapunov (or energy) function of the system, and gives a less conservative approximation.

All the above methods rely upon the explicit knowledge of a mathematical model, i.e., a set of differential equations that result from mathematical modeling and parameter identification methods~\cite{Garnier2008,Augusiak2014,ElKalaawy2015}, and the backward and forward integration require that the system is backward integrable. 

In this paper, we propose a data-driven approach, solely based on information gathered either from the collection of experimental data from a physical system or the numerical integration of an existing numerical simulator of arbitrary form and complexity. If experimental data is available, there is no need to derive a mathematical model and to identify the model parameters. This data, along with numerical methods, allows the approximation of the Koopman operator, which has a set of eigenfunctions that can be used for the approximation of the ROA~\cite{Koopman1931,Williams2015}.  

Rather than describing the time evolution of the system states, the Koopman operator describes the evolution of eigenfunctions~\cite{Mezi2017}. Analyzing these eigenfunctions allows the identification of invariant subspaces that acquire specific characteristics of the dynamical system. For example, if an eigenfunction has an associated eigenvalue equal to one, the value of this function is invariant along the trajectories of the dynamical system. As a consequence, an eigenfunction with an associated eigenvalue equal to one defines an invariant subspace useful to capture specific characteristics of the dynamical system.

Some recent studies~\cite{Mezic2005a,Lan2013,Mauroy2013,Mauroy2016} point out that eigenfunctions that have an associated eigenvalue equal to one are useful for nonlinear system analysis. However, they do not formalize theoretically the characteristics of these functions to perform the analysis. In \cite{Mezic2005a}, the authors use time-averages of observables, i.e., the average of arbitrary functions of a trajectory of the state from a particular initial condition. These functions converge to the unitary (or near unitary) eigenfunction, and therefore their level sets provide visual information on the state space partition. This procedure provides some insight into the partition of the state space in two-dimensional systems or slices of three-dimensional ones. However, this analysis does not provide a criterion to get the partition of the state space, as it only gives visual information, and this is the reason it is limited to low dimensional systems. In addition, for the calculation of the time averages, it is necessary to have the solution (or numerical integration) of the difference/differential equations from every point of the state space under consideration to get the graph depicting the level sets. Another notable approach comes from \cite{Mauroy2013}, that uses the isostables of a system. An isostable of a stable equilibrium point is a set of initial conditions whose trajectories converge  synchronously to the attractor, that is, their trajectories simultaneously intersect subsequent isostables along a trajectory that converges to a stable point. The definition of an isostable comes from the magnitude of the slowest Koopman operator eigenfunction, whose level sets give the isostables. Another important contribution in~\cite{Mauroy2016} is a global stability analysis and the approximation of the region of attraction based on the spectral analysis of the Koopman eigenfunctions. The global stability analysis comes from the zero level sets of the Koopman eigenfunctions related to Koopman eigenvalues whose real part is less than zero. The approximation of the region of attraction comes from the traditional analysis of Lyapunov functions, which are built upon the same Koopman eigenfunctions and eigenvalues from the global stability analysis. To get these eigenfunctions and eigenvalues, the authors use a numerical method based on Taylor and Bernstein polynomials, where it is necessary to know the analytic vector field of the system. Similar to the Lyapunov and energy function-based methods, these methods require the calculation of the level sets of a particular function, which yields the same problems regarding the classification of an arbitrary initial condition in higher dimensional spaces. 

Probably the most notable contribution that comes closer to developing a purely data-driven technique comes from Williams et al., with the development of the extended dynamic mode decomposition algorithm~\cite{Williams2015}. The authors provide more insight into the determination of the ROA for the particular case of a Duffing oscillator with two basins of attraction, and they analyze the leading (unitary) eigenfunction to determine which basin of attraction a point belongs to. The criterion for this classification is the use of the mean value of the unitary eigenfunction as a threshold. Subsequently, the authors parametrize the ROA by recalculating an approximation of the Koopman operator on each basin. Even though the development in~\cite{Williams2015} shows an accurate procedure to get the ROA in the particular case of the Duffing oscillator, there are some issues not covered in the development. For instance, there is no guarantee of the existence of unitary eigenfunctions coming from the approximation methods of the Koopman operator. When a unitary eigenfunction is present, it is often trivial, i.e., the function value is constant in all the state space, and as a consequence, it does not provide information about the system. Furthermore, the ROA analysis methods based on the fixed points of the system rely on the model equations and their linearization to determine the fixed point location and stability. Moreover, the mean value of the unitary eigenfunction may work on the particular case of the Duffing oscillator, but it is not a generalized criterion. Hence, it may not work for other types of systems and there is a lack of guidelines on how to generalize to higher dimensional systems.  

The main contributions of this paper are the solutions to many of the aforementioned problems. The data-driven restriction is held throughout the development, meaning that all the necessary information comes from the approximation of the Koopman operator, including the location and local stability of the fixed points. In addition, our approach presents a suitable algebraic condition to determine the ROA given a set of unitary eigenfunctions, devoid of level set calculation or complex geometric analysis to get the classification of a particular initial condition. As a consequence of having an algebraic condition, our approach is suitable for analyzing higher-dimensional dynamical systems (with dimensionality higher than three). 

In contrast, some shortcomings of our approach are the fact that we are working with simulated data exempt of noise and the assumption of full observability of the system, i.e., complete knowledge of the state.

This paper is organized as follows. The next section introduces the main concepts of region of attraction, Koopman operator theory and the extended dynamic mode decomposition algorithm. Section~\ref{sec:Results} presents and discusses the main contribution of this study, i.e., a numerical procedure to evaluation the regions of attractions based on the EDMD and the calculation of eigenfunctions associated to unitary eigenvalues. Next, The methodology is applied to two examples in Section~\ref{sec:simulation_results}, e.g., a population model, and a higher-dimensional chemical reaction system. In both cases, the data is provided by the numerical simulation of a model, but we stress that the procedure is applicable to experimental data as well, provided that it is available in sufficient numbers to secure a sufficiently accurate approximation. The last section is devoted to concluding remarks and future prospects.

{\textbf Notation} $A^{\top}$ and $A^{+}$ are the transpose and pseudoinverse of a matrix, $A\in\mathbb{R}^{n\times n}$ respectively. 

\section{Basic Concepts and Methods}
\label{sec:methods}
\subsection{Regions of Attraction}
\label{sub:ROA}
Consider a nonlinear dynamical system $(\mathcal{M};T(x);k)$ in discrete-time, with state variables $x\in\mathcal{M}$ where $\mathcal{M}\subseteq\mathbb{R}^{n}$ is the nonempty compact state space, $k\in\mathbb{Z}_{0}^{+}$ is the discrete time, and $T\colon{}\mathcal{M}\rightarrow{}\mathcal{M}$ is the differentiable vector-valued evolution map, i.e.,
\begin{equation}
     x(k+1)=T(x(k)),\quad{}x_0=x(0).
     \label{eq:DiscreteDS}
 \end{equation} 

The solution to~\eqref{eq:DiscreteDS} is the successive application of $T$ from an initial condition $x_0\in{}\mathcal{M}$ at $k=0$, i.e., $x_k=T^k(x_0)\in\mathcal{M}$, which is an infinite sequence  called a trajectory of the system. Suppose $x^{*}\in{}\mathcal{M}$ is a fixed point of~\eqref{eq:DiscreteDS}; i.e., 
\begin{equation}
    T^{k}(x^{*})=x^{*}.
    \label{def:FixedPoint}
\end{equation}

The linearization principle defines the local stability of hyperbolic fixed points, i.e., points that satisfy the Hartman-Grobman theorem~\cite{Khalil2015,Coayla-Teran2007}. This principle states that a fixed point $x_{s}^{*}$ is asymptotically stable if the modulus of all the eigenvalues of the Jacobian matrix evaluated at the fixed point are less than one, and unstable otherwise. Additionally, the {\it type-k\/} of a hyperbolic fixed point is defined as the number $k$ of eigenvalues with modulus greater than one. If only one eigenvalue has modulus greater than one, the fixed point $x^*$ is a {\it type-one\/} point. When the index $k$ of unstable fixed points is equal or greater than one, and less than $n$, the fixed point is called a saddle, denoted by $\hat{x}^{*}$. The {\it type-one\/} saddle points play an important role in the approximation of the ROA.

Eigenvalues $\lambda$ have a corresponding eigenvector $E_{\lambda}$ that spans the eigenspace associated with the eigenvalue. For the eigenvalues with modulus smaller than one, the direct sum of their eigenspaces is the generalized stable eigenspace of a fixed point, i.e., $E^s=\oplus{}E_{|\lambda|<1}$. Conversely, for eigenvalues with modulus greater than one, the direct sum of their eigenspaces is the generalized unstable eigenspace of a fixed point, i.e., $E^u=\oplus{}E_{|\lambda|>1}$. The type of the hyperbolic fixed point defines the dimension of the corresponding eigenspaces, $E^u\in\mathbb{R}^k$ and $E^s\in\mathbb{R}^{n-k}$. The state space $\mathbb{R}^n$ is the direct sum of the two invariant stable and unstable eigenspaces $\mathbb{R}^n=E^s\oplus{}E^u$.

As the Hartman-Grobman theorem establishes a one-to-one correspondence between the nonlinear system and its linearization, locally, the stable and unstable eigen\-spaces are tangent to the unstable and stable manifolds of the hyperbolic fixed point. The definitions of these manifolds are
\begin{align}
    W^{u}(x^{*})&=\{x\in\mathbb{R}^{n}\colon{}\lim_{k\rightarrow-\infty}T^{k}(x)=x^{*}\},\label{eq:unstableM}\\
    W^{s}(x^{*})&=\{x\in\mathbb{R}^{n}\colon{}\lim_{k\rightarrow\infty}T^{k}(x)=x^{*}\},\label{eq:stableM}
\end{align}
for the unstable and stable manifold of a fixed point respectively, assuming that an inverse for the backward flow exists for $T^{k}$.  

Before the statement of the theorem characterizing the ROA of an asymptotically stable point, denote the ROA of the fixed point as $R_{A}(x^{*}_{s})$ and the stability boundary as $\partial{}R_{A}(x^{*}_{s})$. From the definitions of the unstable and stable manifolds~\eqref{eq:unstableM} and~\eqref{eq:stableM}, any system under analysis must satisfy the following three assumptions: 
\begin{enumerate}[label=A\arabic*:]%
  \item All the fixed points on $\partial{}R_{A}(x^{*}_{s})$ are {\it type-one}.
  \item {\sloppy The $W^{u}(x^{*})$ and $W^{s}(x^{*})$ of the {\it type-one\/} points on $\partial{}R_{A}(x^{*}_{s})$ satisfy the transversality condition.}
  \item Every trajectory that starts on $\partial{}R_{A}(x^{*}_{s})$ converges to one of the {\it type-one\/} points as $k\rightarrow\infty$.
\end{enumerate}
\begin{remark}%
Manifolds $A$ and $B$ in $\mathcal{M}$ satisfy the transversality condition if the intersection of the tangent spaces of A and B span the tangent space of $\mathcal{M}$.
\end{remark}

For any hyperbolic dynamical system that satisfies assumptions (A1-A3), the region of attraction of an asymptotically stable point is,
\begin{theorem}%
\label{th:AttractionRegions}{~\cite[Th.~9-(10,11)]{chiang2015stability}}%
    Consider the dynamical system~\eqref{eq:DiscreteDS} and assume it satisfies assumptions (A1-A3).
    Define ${\{\hat{x}^{*}_i\}}_{i=1}^{P}$ as the $P$ {\it type-one\/} hyperbolic fixed points on the boundary, the stability region of an asymptotically stable fixed point. Then,
    \begin{enumerate}
        \item $\hat{x}^{*}_{i}\in\partial{}R_A(x_{s})$ $\iff$ $W^{u}(\hat{x}^{*}_i)\cap{}R_A(x^{*}_{s})\neq\varnothing$
        \item $\partial{}R_A(x^{*}_{s})=\cup{}W^{s}(\hat{x}^{*}_i)$.
    \end{enumerate}
\end{theorem}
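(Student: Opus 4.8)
The plan is to build both claims from three structural facts: that the stability region $R_A(x^*_s)$ is open and fully invariant, that its boundary $\partial R_A(x^*_s)$ is closed and invariant under both $T$ and $T^{-1}$, and the local product structure near a type-one hyperbolic fixed point. First I would record that $R_A(x^*_s)=\{x:\lim_{k\to\infty}T^k(x)=x^*_s\}$ is invariant under $T$ and, since the backward flow is assumed to exist (so $T$ is invertible), under $T^{-1}$ as well, because $T^{k+1}(x)\to x^*_s$ forces $T^{k}(x)\to x^*_s$. Consequently $\overline{R_A}$, the exterior $\mathcal M\setminus\overline{R_A}$, and the boundary $\partial R_A$ are each invariant under the forward and backward maps. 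I would also note that no $\hat x^*_i$ can lie in $R_A(x^*_s)$, since a fixed point distinct from $x^*_s$ cannot converge to $x^*_s$.

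For the easy inclusion of claim~2 I would argue directly from A3. Taking $x\in\partial R_A(x^*_s)$, invariance keeps the entire forward orbit $\{T^k(x)\}$ inside the closed set $\partial R_A$, and A3 forces $T^k(x)\to\hat x^*_i$ for some $i$, i.e. $x\in W^s(\hat x^*_i)$; hence $\partial R_A(x^*_s)\subseteq\bigcup_i W^s(\hat x^*_i)$. For the reverse implication of claim~1, a short invariance argument suffices: if $p\in W^u(\hat x^*_i)\cap R_A(x^*_s)$ then $T^{-k}(p)\to\hat x^*_i$ with every $T^{-k}(p)\in R_A$ by backward invariance, so $\hat x^*_i\in\overline{R_A}$; together with $\hat x^*_i\notin R_A$ and openness of $R_A$ this gives $\hat x^*_i\in\partial R_A$.

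The forward implication of claim~1 is where the hyperbolic structure enters. Since $\hat x^*_i$ is type-one (A1), the stable manifold theorem gives $\dim W^s=n-1$ and $\dim W^u=1$, and a local product neighbourhood $U\cong W^s_{\mathrm{loc}}\times W^u_{\mathrm{loc}}$ in which $W^s_{\mathrm{loc}}$ separates $U$ into the two components carrying the two branches $\gamma_\pm$ of $W^u_{\mathrm{loc}}\setminus\{\hat x^*_i\}$. I would show that, because $\hat x^*_i\in\partial R_A$ and $R_A$ is open and invariant, exactly one branch enters $R_A$: a sequence in $R_A$ converging to $\hat x^*_i$ lies on local stable fibres whose forward orbits shadow $W^u_{\mathrm{loc}}$, and invariance of $R_A$ then deposits orbit points arbitrarily near a point of $W^u(\hat x^*_i)$, so openness yields $W^u(\hat x^*_i)\cap R_A(x^*_s)\neq\varnothing$. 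The remaining inclusion $\bigcup_i W^s(\hat x^*_i)\subseteq\partial R_A(x^*_s)$ of claim~2 reduces, via $W^s(\hat x^*_i)=\bigcup_{k\ge0}T^{-k}(W^s_{\mathrm{loc}}(\hat x^*_i))$ and invariance of $\partial R_A$, to showing $W^s_{\mathrm{loc}}(\hat x^*_i)\subseteq\partial R_A$; in the product neighbourhood the branch $\gamma_+\subseteq R_A$ together with invariance forces the whole $\gamma_+$-side of $W^s_{\mathrm{loc}}$ into $R_A$, while points on the opposite side leave $U$ along $\gamma_-$ and fail to reach $x^*_s$, so $W^s_{\mathrm{loc}}$ is the common frontier of the two sides.

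I expect the genuine difficulty to be this last step together with the forward implication of claim~1: both promote the purely local product picture to a global statement about $R_A$, and this is exactly where transversality (A2) is indispensable. It rules out tangential contacts between $W^u$ and $W^s$ that would let a stable fibre straddle the boundary, and it makes the $\lambda$-lemma (inclination lemma) applicable so that forward iterates of transverse disks accumulate cleanly on the unstable manifold; the finiteness of $\{\hat x^*_i\}_{i=1}^P$ and compactness of $\mathcal M$ would then keep these accumulation arguments uniform.
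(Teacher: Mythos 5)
The paper does not prove this theorem: it is imported verbatim from the cited reference \cite[Th.~9-(10,11)]{chiang2015stability}, so there is no in-paper argument to compare yours against. Judged on its own, your sketch is a faithful reconstruction of the standard Chiang--Hirsch--Wu proof as adapted to discrete time: invariance of $R_A$ and $\partial R_A$ under $T^{\pm 1}$, assumption A3 for $\partial R_A\subseteq\bigcup_i W^s(\hat x^*_i)$, backward invariance plus openness for the implication $W^u(\hat x^*_i)\cap R_A\neq\varnothing\Rightarrow\hat x^*_i\in\partial R_A$, and the inclination ($\lambda$-) lemma together with transversality (A2) for the two genuinely hard steps, which you correctly single out.

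One step as written is not justified by what you invoke. In the final inclusion you assert that invariance ``forces the whole $\gamma_+$-side of $W^s_{\mathrm{loc}}$ into $R_A$''. That the stability boundary locally coincides with $W^s_{\mathrm{loc}}$ near a type-one point on $\partial R_A$ is itself a separate lemma in \cite{chiang2015stability} and does not follow from invariance alone (a priori, points on the $\gamma_+$-side could leave the product neighbourhood and still fail to converge to $x^*_s$). Moreover you do not need it. To get $W^s(\hat x^*_i)\subseteq\partial R_A$ it suffices to show (i) $W^s(\hat x^*_i)\cap R_A=\varnothing$, which is immediate because orbits on $W^s$ converge to $\hat x^*_i\neq x^*_s$, and (ii) $W^s(\hat x^*_i)\subseteq\overline{R_A}$: take any $x\in W^s(\hat x^*_i)$ and a one-dimensional disk $D\ni x$ transverse to $W^s$, apply the $\lambda$-lemma so that $T^k(D)$ accumulates on the point $p\in W^u(\hat x^*_i)\cap R_A$ furnished by claim~1, use openness of $R_A$ to conclude $T^k(D)\cap R_A\neq\varnothing$ for large $k$, and pull back by backward invariance to get $D\cap R_A\neq\varnothing$. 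You gesture at exactly this machinery in your closing paragraph; it should replace, not merely supplement, the ``two-sides'' argument.
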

\noindent Hence, the stability boundary is the union of the {\it type-one\/} hyperbolic fixed points stable manifolds on the stability boundary.
\begin{remark}\label{rem:assumptions}
Assumption (A1) is a generic property of differentiable dynamical systems, while assumptions (A2) and (A3) must be verified. 
\end{remark}
\subsection{Basics of Koopman Operator Theory}%
\label{sub:KO}
Consider a set of arbitrary functions of the state, the so-called {\it observables\/} $f(x)\colon\mathcal{M}\rightarrow{}\mathbb{C}$ for system~\eqref{eq:DiscreteDS} that belong to some function space, i.e., $f(x)\in\mathcal{F}$. There is a discrete-time linear operator $U^{k}$, the Koopman operator, which defines the time evolution of these observables, i.e., 
\begin{equation}
    \left[U^{k}f\right](x)=f\left(T^{k}(x)\right).
    \label{eq:Koopman}    
\end{equation}
The left-hand side of~\eqref{eq:Koopman} is the time evolution of the observables, while the right-hand side is the time evolution of the state subsequently evaluated by the observables. The Koopman operator is linear but infinite dimensional, and some form of truncation to a finite dimensional approximation will be required in practice, introducing a trade-off between accuracy and dimensionality.

The linear operator has a spectral decomposition of tuples ${\{(\mu_i,\phi_i(x),v_i)\}}_{i=1}^{\infty}$ of eigenvalues, eigenfunctions, and modes. The eigenvalues and eigenfunctions satisfy the condition that the corresponding eigenvalue determines the dynamics or time evolution of a specific eigenfunction
\begin{equation}
    [U^{k}\phi_{i}](x)=\mu_{i}^k\phi_{i}(x),
    \label{eq:PhiEvol}
\end{equation}
and the modes map the linear evolution of eigenfunctions~\eqref{eq:PhiEvol} into the original observables by weighting the eigenfunctions  
\begin{equation}
    f(x)=\sum_{i=1}^{\infty}v_{i}\phi_{i}(x).
    \label{eq:FullState}
\end{equation}
The importance and advantages of the Koopman operator and the diagonalization provided by the spectral decomposition are highlighted by equations ~\eqref{eq:PhiEvol} and~\eqref{eq:FullState}. Hence, the evolution of observables with respect to the spectral decomposition of the Koopman operator is
\begin{equation}
    f\left(T^k(x)\right)=\left[U^{k}f\right](x)=\sum_{i=1}^{\infty}v_i\mu_i^k\phi_i(x).
    \label{eq:ObsEvol}
\end{equation} 

\subsection{Extended Dynamic Mode Decomposition Algorithm}
\label{sub:EDMD}
The objective of the EDMD algorithm~\cite{Williams2015} is to get a finite and discrete-time approximation of the Koopman operator based on sampled data of the underlying dynamical system~\cite{Klus2016,Korda2017a}.  

The EDMD algorithm that approximates the discrete-time Koopman operator of~\eqref{eq:DiscreteDS} requires $N$ pairs of snapshot data, either from the numerical solution of an existing mathematical model or from experimental measurements at a specific sampling time $\Delta{}t$. The sets of snapshot pairs ${\{(x_i,y_i)\}}_{i=1}^{N}$ satisfy the relationship $y_i=x_{i+1}=T(x_i)$ and their definition in matrix form is 
\begin{align}
    X&=\begin{bmatrix}x_1&x_2&\cdots{}&x_N\end{bmatrix},&Y&=\begin{bmatrix}y_1&y_2&\cdots{}&y_N\end{bmatrix}.
\end{align}
The {\it extended\/} part of the EDMD algorithm consists in the approximation of the Koopman operator on a {\it lifted\/} space of the state variables, rather than approximating the state space as in the dynamic mode decomposition algorithm~\cite{SCHMID2010}. The {\it lifting\/} procedure consists in evaluating the state of the system with the set of observables $\Psi{}\colon{}\mathcal{M}\rightarrow\mathbb{C}^{d\times{}1}$; $\Psi(x)={[\psi_1(x),\ldots,\,\psi_d(x)]}^\top$. However, the choice of observables for a particular system is still an open question, and the common choices are orthogonal polynomials~\cite{Koekoek2010}, radial basis functions, or an arbitrarily constructed set with polynomial elements, trigonometric functions, logarithmic functions or any combination of them~\cite{Brunton2015}. Our choice is a low-rank orthogonal polynomial basis~\cite{Konakli2016a,Konakli2016,Garcia-Tenorio2021a,Garcia-Tenorio2021b}, where every element of $\Psi{}(x)$ is the tensor product of $n$ univariate orthogonal polynomials from a set ${\{\pi_{\alpha_j}(x_{j})\}}_{\alpha_j=0}^{p}$, where $\alpha_j$ is the degree of the polynomial on the $j^{\text{th}}$ component of the $x$ vector, and $p$ is the maximum degree of the polynomial. Every component of $\Psi{}(x)$ is given by
\begin{equation}
    \psi_l(x)=\prod_{j=1}^{n}\pi_{\alpha_j}(x_{j}).
\end{equation}
The low-rank orthogonal polynomial basis comes from the truncation scheme for the non-empty finite set of indices $\alpha$, where the choice of indices is based on $q$-quasi-norms\footnote{The quantity $\Vert\cdot\Vert_{q}$ is not a norm because it does not satisfy the triangle inequality.}; $\alpha=\{\alpha\in\mathbb{N}^{n}\colon{}\Vert\alpha\Vert_{q}\leq{}p\}$ with
\begin{equation}
    \Vert\alpha\Vert_{q}={\left(\sum_{i=1}^{n}\alpha_i^q\right)}^{\frac{1}{q}}.
\end{equation} 

This choice of truncation scheme has two advantages. First, it reduces the number of elements in the set of observables $\Psi(x)$, and therefore, handles the curse of dimensionality problem when the dimension of the state space grows and the available computational resources are limited. The second advantage is empirical and refers to the numerical stability of the least squares solution for the approximation of the Koopman operator. Having a reduced orthogonal basis improves the condition number of the matrices involved in the computation. As a consequence, the algorithm does not rely on the pseudo inverse of a matrix to get the approximation and its accuracy also increases. For a detailed description of the use of p-q-quasi norms for the EDMD algorithm, we refer to previous works by the authors~\cite{Garcia-Tenorio2021a,Garcia-Tenorio2021b}.

Furthermore, the approximation of the discrete-time $d$-dimensional Koopman operator $U_d$, satisfies the following condition~\cite{Williams2015}
\begin{equation}
    \Psi(Y)=U_d\Psi(X)+r(X),
    \label{eq:Kcond}
\end{equation}
where $r(X)\in\mathcal{F}$ is the residual term to minimize in order to find $U_d$. This minimization accepts a closed form solution within the least mean squares problem, where the objective function has the form 
\begin{equation}
    \Vert{}r(X)\Vert^2=\frac{1}{N}\sum_{i=1}^{N}\frac{1}{2}\left\Vert\Psi(y_i)-U_{d}\Psi(x_i)\right\Vert_2^2,
    \label{eq:min}
\end{equation}
and the solution is 
\begin{equation}
    U_d\triangleq{}AG^{+},
\end{equation}
where $G,A\in\mathbb{C}^{d\times{}d}$ are square matrices given by
\begin{align}
    G=&\frac{1}{N}\sum_{i=1}^{N}\Psi(x_i){\Psi(x_i)}^{\top},\label{eq:Gmatrix}\\
    A=&\frac{1}{N}\sum_{i=1}^{N}\Psi(y_i){\Psi(x_i)}^{\top}.\label{eq:Amatrix}
\end{align}

The finite-dimensional and discrete-time approximation of the Koopman operator from the EDMD algorithm has eigenvalues $M=\text{diag}(\mu_1,\,\ldots,\,\mu_{d})$, right eigenvectors $\Xi=[\xi_1,\ldots,\xi_{d}]$, and left eigenvectors, $\Xi^{-1}=W^{\star}$. Furthermore, the approximation of the eigenfunctions $\Phi={[\phi_1,\ldots,\phi_d]}^{\top}$ comes from weighting the set of observables with the matrix of left eigenvectors~\cite{Williams2015,Klus2016}: 
\begin{equation}
    \Phi^\top(x)={\Psi(x)}^{\top}W^{\star}.
\label{eq:PHI}
\end{equation}

Using~\eqref{eq:FullState}, the recovery of the original observables $\Psi(x)$ from the set of eigenfunctions is provided by:
\begin{equation}
     \Psi(x)=\Xi\Phi(x).
     \label{eq:gdPhi}
\end{equation} 

Using~\eqref{eq:ObsEvol}, the time evolution of observables according to the spectrum of the Koopman operator is given by: 
\begin{equation}
      \Psi(T^k(x)) =\Xi{}M^k\Phi(x).
      \label{eq:ObsEvolEDMD}
\end{equation}  
The common practice to recover the state is to include the functions that capture each of the states in the set of observable, i.e., $\psi_i(x)=x_i$. The value of the time-evolution of the states is the value of these particular elements of $\Psi(x)$. The matrix $B\in\mathbb{R}^{d\times{}n}$ recovers these values from~\eqref{eq:ObsEvolEDMD}, where $B$ is a matrix of unit vectors $\hat{e}_i$ indexing the $n$ observables that capture every one of the states, i.e., $f_i(x)=x_i$. As for a polynomial basis, these elements are not always present, and $B$ is a matrix of unit vectors indexing injective observables. To clarify the importance of using the inverse of injective observables against other methods to recover the state, we refer to previous works by the authors~\cite{Garcia-Tenorio2021a,Garcia-Tenorio2021b}.

The evolution of observables~\eqref{eq:ObsEvolEDMD} is used to define the state evolution map according to the approximation of the Koopman decomposition ${\{(\mu_i,\phi_i(x),v_i)\}}_{i=1}^{d}$, i.e., evolving~\eqref{eq:DiscreteDS} $k$ times from an initial condition $x_{0}$
\begin{equation}
    x(k)=T^{k}(x_{0})=\Psi_{B}^{-1}\left(B^\top\Xi{}M^k\Phi(x_0)\right),
    \label{eq:ForwardFlow}
\end{equation}
where $\Psi_B^{-1}(x)$ denotes the inverse of the injective observables selected by $B$.

Note that in addition to the forward evolution of the states, the discrete-time approximation also gives their evolution in reverse time
\begin{equation}
    x(-k)=\Psi_{B}^{-1}\left(B^\top\Xi{}M^{-k}\Phi(x_0)\right).
    \label{eq:BackwardFlow}
\end{equation}

In conclusion, the EDMD allows for a linear discrete-time representation on an extended space of a nonlinear system, with the advantages that the induced spectrum and evolution maps provide. These advantages are the core of the analysis of the ROA. 

For completeness in stating not only the advantages, but also the shortcomings of the approximation, note that the EDMD algorithm assumes the knowledge of the full state of the system and does not take noisy signals into account. Therefore, the application of the algorithm in a real case scenario requires complementing the development with an observer able to handle these issues.

\section{Evaluation of the ROA using EDMD}
\label{sec:Results}
For obtaining the ROA of asymptotically stable fixed points in a data-driven fashion using the EDMD approximation of the Koopman operator, it is necessary to accomplish numerous tasks. First, it is necessary to get the location of the fixed points and determine their stability, especially for the asymptotically stable and the {\it type-one\/} points. Subsequently, the spectrum of the Koopman operator approximation can be analyzed along with the theoretical concepts from \cref{sub:ROA} to approximate the ROA of the asymptotically stable fixed points under the multistability phenomena.

In order to accomplish this objective, several assumptions must hold for the application of the algorithm. First, the EDMD algorithm has knowledge of the full state and enough trajectories from the dynamical system to have an accurate approximation of the operator. This condition can be checked with the empirical error that compares the orbits of the system from~\eqref{eq:DiscreteDS} with the discrete-time Koopman state evolution map~\eqref{eq:ForwardFlow}. Second, assumptions (A1--A3) introduced in subsection \cref{sub:ROA} hold. In general, however, it is necessary to know the differential equation model to check assumptions (A2) and (A3). Due to this difficulty, we limit the analysis to mass action systems for the illustration of the results in \cref{sec:simulation_results}. This type of systems inherently satisfies the assumptions~\cite{Chellaboina2009a}.

\subsection{Fixed Points Approximation}
\label{sub:equilibrium_points}
The first step of the analysis consists in locating the fixed points of the hyperbolic system to assess their stability. A fixed point of the discrete time nonlinear system~\eqref{eq:DiscreteDS} is an invariant subspace that has the property that whenever the state starts in it, it will remain in it for all future time, as in~\eqref{def:FixedPoint}.

To approximate the fixed points, consider the forward and backward state evolution maps of a dynamical system in discrete-time given by the Koopman operator approximation in~\eqref{eq:ForwardFlow} and~\eqref{eq:BackwardFlow}. If the evolution of these mappings is invariant, i.e., maps to themselves, then the system is at a fixed point. To accurately approximate these points, we propose to formulate a minimization problem where the objective function is the euclidean norm of the comparison between the state and its approximate discrete-time evolution~\eqref{eq:ForwardFlow}~\cite{Garcia-Tenorio2019}. 

\begin{lemma}[Fixed points]%
\label{lm:EquilibriumPoints}
    Let $(\mathcal{M};T(x);k)$ be a dynamical system that admits a Koopman operator approximation $(\mathcal{F}_d;U_{d};k)$, and $x^{*}$ is a fixed point of $T(x)$. Then
    \begin{equation}        x^{*}=\min_{x}\left\Vert\Psi_{B}^{-1}\left(B^\top\Xi{}M^k\Phi(x_0)\right)-x\right\Vert^2_2.
        \label{eq:Equilibria}
    \end{equation}
\end{lemma}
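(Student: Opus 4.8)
The plan is to recognize the right-hand side as the squared Euclidean residual between the $k$-step Koopman evolution of a candidate point and the point itself, and then to exploit the defining property of a fixed point to show this residual vanishes there. First I would rewrite the objective using the forward flow map~\eqref{eq:ForwardFlow}: by construction $\Psi_{B}^{-1}\left(B^\top\Xi{}M^k\Phi(x)\right)=T^{k}(x)$, so the quantity to be minimized is exactly $J(x)=\Vert T^{k}(x)-x\Vert_2^2$. I read the decision variable in the statement as the argument $x$ appearing inside $\Phi$ (so that $x_0$ plays the role of the optimization variable), and $\min$ as $\arg\min$, since the left-hand side is a point of $\mathcal{M}$ rather than a scalar.

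The core of the argument is then immediate. Since $J$ is a squared norm, $J(x)\geq 0$ for every $x\in\mathcal{M}$, so zero is a lower bound on the objective. By the fixed-point definition~\eqref{def:FixedPoint}, $T^{k}(x^{*})=x^{*}$, hence $J(x^{*})=\Vert x^{*}-x^{*}\Vert_2^2=0$. Therefore $x^{*}$ attains the lower bound and is a global minimizer of $J$, which is precisely the asserted identity. Conversely, any point at which $J$ vanishes satisfies $T^{k}(x)=x$ and is thus invariant under the $k$-step map, so the global minimizers coincide with the fixed points of $T^{k}$; for the hyperbolic, multistable systems under consideration, whose bounded trajectories converge to equilibria, these reduce to the fixed points of $T$.

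The main obstacle is not the algebra but the passage from the exact Koopman representation to the EDMD approximation. Equation~\eqref{eq:ForwardFlow} is an exact identity only when the lifted dynamics are closed under $U_{d}$; in practice the residual $r(X)$ in~\eqref{eq:Kcond} is nonzero, so $\Psi_{B}^{-1}\left(B^\top\Xi{}M^k\Phi(x)\right)$ reproduces $T^{k}(x)$ only up to approximation error. Consequently $J(x^{*})$ is small but not identically zero, and the precise statement is that the numerical minimizer approximates the true fixed point with an error controlled by the empirical fit of the operator. I would therefore carry out the argument under the standing assumption, stated before the lemma, that the approximation error is negligible, so that~\eqref{eq:ForwardFlow} may be treated as exact and the minimum value is genuinely zero. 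A secondary subtlety is that, for a fixed $k>1$, the minimizers are fixed points of $T^{k}$ and could in principle be genuine period-$k$ orbits; taking $k=1$, or invoking the absence of nontrivial periodic orbits for the convergent systems treated here, removes this ambiguity.
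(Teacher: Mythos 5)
Your proposal is correct and follows essentially the same route as the paper: both recast the statement as the least-squares problem $J(x)=\Vert b(x)-x\Vert^2$ with $b(x)=\Psi_{B}^{-1}(B^\top\Xi M\Phi(x))$ and identify the fixed point as its (arg)minimizer. In fact your write-up supplies the steps the paper leaves implicit — the nonnegativity of $J$, its vanishing at $x^{*}$ via~\eqref{def:FixedPoint}, the converse direction with the period-$k$ caveat, and the caveat about the EDMD residual — whereas the paper's proof merely defines the objective and asserts the conclusion.
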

\begin{proof}
    Let $T^{k}(x)=\Psi_{B}^{-1}(B^\top\Xi{}M^k\Phi(x_0))$ the flow map of $(\mathcal{M};T(x);k)$, assume $k=K$ finite, and define $b(x)=\Psi_{B}^{-1}(B^\top\Xi{}M\Phi(x))$, using Definition~\ref{def:FixedPoint} and $b(x)$, the least squares problem
    \begin{equation}
        J(x)=\frac{1}{2}\left\Vert{}b(x)-x\right\Vert^2,
    \end{equation}
    with solution
    \begin{equation}
        x^*=\underset{x}{\text{argmin}}\left(J(x)\right),
    \end{equation}
    gives the location of the fixed points.
\end{proof}
\begin{remark}
    Note that this procedure is possible in the portion of the state space from where the data snapshots lie, and corresponds to the fixed points of the nonlinear underlying hyperbolic dynamical system. If the system is not hyperbolic, this condition does not hold. 
\end{remark}
\begin{remark}
    This lemma is only accountable for the location of the fixed points in the state space; it does not give information about their stability.
\end{remark}
\begin{remark}
   Equation~\eqref{eq:ForwardFlow} is a nonlinear, discrete-time evolution mapping, and under the definition of fixed point, i.e., $\bar{T}^k(x^*)=x^*$ it is possible to get the approximation of the hyperbolic fixed points by solving
   \begin{equation}
     \bar{x}^*\subseteq\{x\in\mathbb{R}^n\colon\bar{T}^k(x)-x=0\}.
   \end{equation}
   In practice, this procedure has two issues. First, as the nonlinear mapping~\eqref{eq:ForwardFlow} comes from a set of observables, the dimension and complexity of these functions affect the possibility of getting a solution in polynomial time. As a consequence, in higher order polynomial expansions, even if it is possible to find a solution, the computational time is high.
   Second, when there is a solution, it is not clear which elements of the subset represent an actual fixed point of the system. Solving a nonlinear set of equations gives several solutions in the complex plane where some of which correspond to the real-valued solutions of the system, i.e., not all the points of the subset are fixed points of the system, while the converse is true. Considering the data-driven case, the fixed points of the system are not known a priori and therefore using this definition to approximate the fixed points is not feasible.
 \end{remark}

With the location of the fixed points, it is necessary to establish their stability.

\subsection{Stability of Fixed Points}
\label{sub:stability_of_equilibria}
The traditional way of establishing the local stability of a hyperbolic fixed point is through the analysis of the Jacobian matrix of the system~\eqref{eq:DiscreteDS} evaluated at the fixed point. Our proposed approach is to analyze the state evolution map from the discrete approximation of the Koopman operator in the same way. In other words, the stability comes from the linearization of~\eqref{eq:ForwardFlow} which is a nonlinear mapping of the state.

\begin{definition}[Stability] 
   Let $(\mathcal{M};T(x);k)$ be a dynamical system that admits a Koopman operator approximation $(\mathcal{F}_d;U_{d};k)$. Define $b(x)=\Psi_{B}^{-1}(B^\top\Xi{}M^k\Phi(x))$ from the state evolution map~\eqref{eq:ForwardFlow}, the local linearization of $b(x)$ at a given point $x^{*}$ is
\begin{align}
      x(k+1)=&\left.\left[\frac{\partial{}T(x)}{\partial{}x_1}\cdots\frac{\partial{}T(x)}{\partial{}x_n}\right]\right|_{x^{*}}x(k)\nonumber\\
      =&\left.\left[\frac{\partial{}b(x)}{\partial{}x_1}\cdots\frac{\partial{}b(x)}{\partial{}x_n}\right]\right|_{x^{*}}x(k)\nonumber\\
      =&Hx(k).
      \label{eq:Linearization}
\end{align}
The local stability of a fixed point $x^{*}$ according to the eigenvalues ${\{\mu_i\}}_{i=1}^{n}$ from the spectral decomposition of the matrix $H$ are given by,
\begin{itemize}
    \item if $|\mu_i|<1$ for all $i=1,\ldots,n$ then $x^{*}$ is asymptotically stable,
    \item if $|\mu_i|>1$ for all $i=1,\ldots,n$ then $x^{*}$ is unstable,
    \item if $|\mu_i|>1$ for some $i=1,\ldots,n_{k}$ and $|\mu_i|<1$ for some $i=n_k+1,\ldots,n$, then $x^{*}$ is also unstable but has modal components that converge to it, making it a saddle point. 
\end{itemize}
\end{definition}
\begin{remark}
Note that the inequalities are strictly greater-than or less-than, this is in accordance with the hyperbolicity assumption. 
\end{remark}

With the information about the location and stability of fixed points, the main result of this paper is formulated, i.e., the approximation of the boundary of the ROA via eigenfunctions of the discrete-time Koopman operator. 
\subsection{Approximation of the ROA boundary}
\label{sub:Unit_Eig}

This section describes the approximation of the ROA of asymptotically stable fixed points in a data-driven approach using the discrete-time approximation of the Koopman operator. The claim of the main result of this study is: the level sets of a unitary eigenfunction give the boundary of the ROA of the asymptotically stable fixed points. This holds from the following set of premises.  
\begin{enumerate}
  \item A system that admits a Koopman operator transformation has an infinite set of eigenfunctions.%
    \label{it:infiniteEigenfunctions}
  \item Some nontrivial eigenfunctions have an associated eigenvalue equal to one (unitary eigenfunctions).%
    \label{it:nontrivialExists}
  \item A unitary eigenfunction is invariant along the trajectories of the system (from equation~\eqref{eq:PhiEvol}).%
    \label{it:unitaryIsInvariant}
  \item The trajectories of the system are level sets of a unitary eigenfunction.%
    \label{it:levelTrajectories}
  \item The stable manifold of a {\it type-one\/} saddle point in an $n$-dimensional system is an $(n-1)$-dimensional hyper surface composed by the union of the trajectories that converge to the point (from equation~\eqref{eq:stableM}).
  \item The level set of a unitary eigenfunction at a {\it type-one\/} fixed point is the stable manifold of the point.
  \item The boundary of the ROA of an asymptotically stable fixed point is the union of the stable manifolds of the {\it type-one\/} fixed points in the boundary (from \cref{th:AttractionRegions}).
\end{enumerate}

In other words, if it is possible to capture the $(n-1)$-dimensional hyper surfaces that converge to the {\it type-one\/} saddle points in the boundary of the ROA via the Koopman operator, then, it is possible to get the boundary of the ROA. If there are eigenfunctions that are invariant along the trajectories of the system, then, evaluating these eigenfunctions on the {\it type-one\/} saddle points of the system gives the constant values for which the unitary eigenfunctions capture the stable manifold of the saddle points, i.e., the level sets of the unitary eigenfunction. A level set of an arbitrary function $h(x)\colon\mathbb{R}^n\rightarrow\mathbb{C}$ for any constant $c\in\mathbb{C}$ is $\Gamma_{c}(h(x))=\{x\in{}\mathbb{R}^n\colon{}h(x)=c\}$. 

To guarantee the existence of nontrivial unitary eigenfunctions, consider the following property of the eigenfunctions of the Koopman operator (see property 1 in ~\cite{Mauroy2016}, for its analogous in continuous time).
\begin{property}
Suppose $\phi_1,\phi_2\in\mathcal{F}_d$ are Koopman eigenfunctions of an arbitrary system with associated eigenvalue $\mu_1$ and $\mu_2$. If  $\bar{\phi}(x)=\phi_{1}^{k_1}(x)\phi_{2}^{k_2}(x)\in\mathcal{F}_d$, for constants $k_1,k_2\in{}\mathbb{C}$, then, $\bar{\phi}(x)$ is an eigenfunction constructed upon the product of two arbitrary eigenfunctions with associated eigenvalue $\bar{\mu}=\mu_1^{k_1}\mu_2^{k_2}$.
\begin{proof}
  It follows from condition~\eqref{eq:PhiEvol} that 
\begin{align}
        \left[U_d\bar{\phi}\right](x)&=\left[U_d(\phi_1^{k_1}\phi_2^{k_2})\right](x)\\
                          &={\left([U_d\phi_1](x)\right)}^{k_1}{\left([U_d\phi_2](x)\right)}^{k_2}\\
                          &=\mu_1^{k_1}\phi_1^{k_1}(x)\mu_2^{k_2}\phi_2^{k_2}(x).
        \label{eq:eigenRelation}
    \end{align}
\end{proof}
\end{property}
This means that from a finite approximation of the Koopman operator, there is an infinity of possibly dependent eigenfunctions, showing that premise~\ref{it:infiniteEigenfunctions} holds. Moreover, from the previous construction, it is possible to find the complex constants $k_1$ and $k_2$ for eigenvalues $\mu_1$ and $\mu_2$ such that $\bar{\mu}=1$ and $\bar{\phi}(x)=\phi_{+}(x)$, where $\phi_+$ denotes an eigenfunction with a unitary associated eigenvalue. From~\eqref{eq:PhiEvol} it follows that $\phi_+(x)$ is an eigenfunction invariant along the trajectories of the system because
\begin{align}
  [U_d^k\phi_+](x)&=\phi_+(x)
\label{eq:constantPhi}
\end{align}

which demonstrates that premises~\ref{it:nontrivialExists} and~\ref{it:levelTrajectories} also hold and explains premise~\ref{it:unitaryIsInvariant}. As a result from the previous development, we can state our main result:

\begin{theorem}%
    \label{prop:Manifold}
     If there exists a dynamical system $(\mathcal{M};T(x);k)$ with multiple stable points that admits a Koopman operator approximation $(\mathcal{F}_d;U_{d};k)$, then the approximation of the stable manifold of the {\it type-one\/} saddle points $\hat{x}^*$ in the boundary of the ROA is the level set of a unitary eigenfunction $\phi_+(x)$ whose constant value comes from the evaluation of the unitary eigenfunction at the saddle point in the boundary, i.e.,     
    \begin{align}
       W^{s}(\hat{x}^{*})&=\Gamma_{c_{\phi_{+}(\hat{x}^*)}}\phi_+(x)\\
       &\subseteq\{x\in{}\mathbb{R}^n\colon{}\phi_{+}(x)=c_{\phi_{+}(\hat{x}^*)}\}.
        \label{eq:LevelSet}
    \end{align}
\end{theorem}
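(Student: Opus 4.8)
The plan is to prove the set equality $W^{s}(\hat{x}^{*})=\Gamma_{c}(\phi_{+})$, with $c=c_{\phi_{+}(\hat{x}^{*})}=\phi_{+}(\hat{x}^{*})$, by a double inclusion, and then to recover the whole boundary by a union over the $P$ \emph{type-one} saddles via \cref{th:AttractionRegions}. The forward inclusion $W^{s}(\hat{x}^{*})\subseteq\Gamma_{c}(\phi_{+})$ is the routine half. First I would fix a unitary eigenfunction $\phi_{+}$ furnished by the Property, so that $[U_{d}^{k}\phi_{+}](x)=\phi_{+}(x)$ as in \eqref{eq:constantPhi}, which by the Koopman definition \eqref{eq:Koopman} means $\phi_{+}(T^{k}(x))=\phi_{+}(x)$ for every $k$. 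Taking any $x\in W^{s}(\hat{x}^{*})$, the definition \eqref{eq:stableM} gives $T^{k}(x)\to\hat{x}^{*}$; since $\phi_{+}\circ T^{k}$ is constant in $k$ and $\phi_{+}$ is continuous, passing to the limit yields $\phi_{+}(x)=\phi_{+}(\hat{x}^{*})=c$, hence $x\in\Gamma_{c}(\phi_{+})$.

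The reverse inclusion is the crux and carries the main difficulty. Because $\phi_{+}$ is conserved, every level set is a union of complete orbits, and the saddle-value set $\Gamma_{c}(\phi_{+})$ generically contains orbits that do not converge to $\hat{x}^{*}$; in particular $\phi_{+}\equiv c$ along the one-dimensional unstable manifold $W^{u}(\hat{x}^{*})$ as well, so $\Gamma_{c}(\phi_{+})$ is typically singular at $\hat{x}^{*}$, where the stable sheet $W^{s}(\hat{x}^{*})$ crosses $W^{u}(\hat{x}^{*})$. The goal is therefore to extract the correct sheet rather than to equate $W^{s}(\hat{x}^{*})$ with all of $\Gamma_{c}(\phi_{+})$. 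I would argue locally with the Hartman-Grobman theorem: for a \emph{type-one} point the stable eigenspace $E^{s}$ is $(n-1)$-dimensional, and $W^{s}(\hat{x}^{*})$ is the unique invariant manifold tangent to $E^{s}$ at $\hat{x}^{*}$. Using the transversality Assumption (A2) to ensure that, away from $\hat{x}^{*}$, the set $\Gamma_{c}(\phi_{+})$ is a smooth $(n-1)$-dimensional hypersurface whose branches cross transversally, I would single out the branch tangent to $E^{s}$ and identify it with $W^{s}(\hat{x}^{*})$: it contains $W^{s}(\hat{x}^{*})$ by the forward inclusion, it has the same dimension, and both sets are $T$-invariant, which forces their coincidence on that branch.

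Finally I would globalize. With each stable manifold $W^{s}(\hat{x}^{*}_{i})$ identified as the sheet of $\Gamma_{c_{i}}(\phi_{+})$ through the saddle $\hat{x}^{*}_{i}$ at the value $c_{i}=\phi_{+}(\hat{x}^{*}_{i})$, part~(2) of \cref{th:AttractionRegions} gives $\partial R_{A}(x^{*}_{s})=\bigcup_{i=1}^{P}W^{s}(\hat{x}^{*}_{i})$, so the boundary is exactly the union of these unitary level sets, which is the assertion. I expect the genuinely hard point to be the reverse inclusion: separating the stable sheet from the unstable manifold, and from any other orbit sharing the value $c$, at the singular crossing, which is why the argument must rest on the eigenspace tangency supplied by Hartman-Grobman and on the transversality hypothesis (A2) rather than on the conserved eigenfunction alone. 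A secondary wrinkle is that $\phi_{+}$ may be complex-valued; to recover a genuine codimension-one level set one works with the real first integral delivered by the product construction of the Property, whose real level set carries the $(n-1)$-dimensional sheet.
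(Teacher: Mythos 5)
Your forward inclusion is exactly the paper's argument, stated more carefully: the paper likewise passes from $x\in W^{s}(\hat{x}^{*})$ via \eqref{eq:stableM}, the Koopman relation \eqref{eq:Koopman}, and the invariance \eqref{eq:constantPhi} to $\phi_{+}(x)=\phi_{+}(\hat{x}^{*})$, and your explicit appeal to the continuity of $\phi_{+}$ to justify passing to the limit is a point the paper leaves implicit. Where you genuinely depart from the paper is the reverse inclusion. The paper does not attempt it: its proof stops at the containment $W^{s}(\hat{x}^{*})\subseteq\{x\colon\phi_{+}(x)=c_{\phi_{+}(\hat{x}^{*})}\}$ and then restates the definition of a level set, so the displayed equality $W^{s}(\hat{x}^{*})=\Gamma_{c}\phi_{+}$ is never actually established --- only the ``$\subseteq$'' half is. Your diagnosis of why equality cannot hold verbatim is correct and sharper than the paper's own treatment: since $\phi_{+}$ is a first integral, the level set through the saddle also contains $W^{u}(\hat{x}^{*})$ (whose orbits converge to $\hat{x}^{*}$ backward in time, cf.\ \eqref{eq:unstableM}) and, in general, any other complete orbit on which $\phi_{+}$ happens to take the value $c$. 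Your proposed repair --- Hartman--Grobman tangency to $E^{s}$ plus Assumption (A2) to single out the $(n-1)$-dimensional stable sheet --- is the right idea but not yet a proof: ``same dimension, both invariant, one contains the other'' does not force the sheet to coincide with $W^{s}(\hat{x}^{*})$ outside a neighborhood of the saddle, and it does not exclude components of $\Gamma_{c}(\phi_{+})$ disconnected from $\hat{x}^{*}$. So your write-up carries an honest gap in the reverse direction, but that gap is inherited from (indeed, is papered over in) the original, which only ever delivers the inclusion. Your closing caveat about complex-valued $\phi_{+}$ is also well taken: the paper silently treats the level set as codimension one, which fails for genuinely complex eigenfunctions, and only addresses the issue later and operationally by taking real parts in \eqref{eq:unitaryDivision}.
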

\begin{proof}
From the definition of stable manifold of a {\it type-one\/} saddle point~\eqref{eq:stableM} in \Cref{sub:ROA} 
\begin{equation}
  W^s(\hat{x}^{*})=\{x\in{}\mathbb{R}^{n}\colon{}\lim_{k\rightarrow{}\infty}T^{k}(x)=\hat{x}^{*}\}.
\end{equation}
Evaluating this manifold with an arbitrary eigenfunction $\phi(x)$, and using the condition of the Koopman operator in~\eqref{eq:Koopman} gives
\begin{align}
  \phi(W^s(\hat{x}^{*}))&\approx\{x\in{}\mathbb{R}^n\colon{}\lim_{k\rightarrow{}\infty}\phi(T^k(x))=\phi(\hat{x}^{*})\}\\
  &\approx\{x\in{}\mathbb{R}^n\colon{}\lim_{k\rightarrow{}\infty}[U_d^{k}\phi](x)=\phi(\hat{x}^{*})\}
\end{align}
Hence, for this arbitrary eigenfunction, the associated eigenvalue must be unitary for the equality to hold. Moreover, from \eqref{eq:constantPhi} it is clear that the time evolution of an eigenfunction with a unitary associated eigenvalue is invariant along the trajectories of the system, or in other words, it is independent on time evolution. These two developments imply that   
\begin{equation}
  \phi(W^s(\hat{x}^{*}))=\{x\in{}\mathbb{R}^n\colon{}\phi_+(x)=c_{\phi_+(\hat{x}^{*})}\}.
  \label{eq:uniteigProof}
\end{equation}
Notice that the right-hand part of \eqref{eq:uniteigProof} is the definition of a level set for an eigenfunction with unitary associated eigenvalue $\phi_+(x)$. As the trivial eigenfunction $\phi_{\mu=1}=1$ belongs to the set of eigenfunctions with unitary associated eigenvalue, the left-hand side holds for a subset of the unitary eigenfunctions $\phi_+(x)$. Therefore, the stable manifold of a {\it type-one} saddle point in the boundary of the ROA is
\begin{equation}
  \Gamma_{c_{\phi_{+}(\hat{x}^*)}}\phi_{+}(x)\subseteq\{x\in{}\mathbb{R}^{n}\colon{}\phi_{+}(x)=c_{\phi_{+}(\hat{x}^*)}\}.
        \label{eq:EiglevelSet}
\end{equation}
\end{proof}

In summary, the spectral decomposition of the discrete-time approximation of the Koopman operator can be used to find the nontrivial unitary eigenfunction, which, evaluated at the {\it type-one\/} points, characterizes the ROA of asymptotically stable points. 

\subsection{Algorithm}
\label{sub:algorithm}
The approach presented in \cref{sub:EDMD,sub:equilibrium_points,sub:Unit_Eig} for obtaining the attraction regions of asymptotically stable fixed points is summarized in Algorithm~\ref{alg:AtractionKoopman}, for which the following assumptions hold. 
\begin{enumerate}[label=B\arabic*:]
    \item The system under consideration has multiple hyperbolic fixed points.
    \item At least one of the fixed points is asymptotically stable.
    \item There is enough snapshot data either from measurements of the real system or a numerical simulation for constructing a discrete approximation of the Koopman operator. This condition can be checked with the empirical error that is a comparison between the data from the numerical integration of the system dynamics, and the state evolution map from the approximation of the discrete-time Koopman operator.
\end{enumerate}

\label{alg:AtractionKoopman}
\begin{algorithmic}[1]
  \Ensure{}Assumptions (B1--B3) are satisfied
  \State{\textbf{Data:} ${\{x_i,y_i\}}_{i=1}^{N}$}
  \Return{}$\partial{}R_A(x^{*}_{s})$
  \Procedure{ROA}{${\{x_i,y_i\}}_{i=1}^{N}, q, p$}
    \State{$[\Phi(x),M]\longleftarrow{}\text{EDMD}\left({\{x_i,y_i\}}_{i=1}^{N}, q, p\right)$}
    \State{$b(x)\longleftarrow\Psi_{B}^{-1}(B^\top\Xi{}M^k\Phi(x))$}
    \State{$x^{*}\longleftarrow \min_{x}\left \Vert{}b(x)-x\right \Vert^2_2$}
    \State{$E_{q}\longleftarrow{}\text{size}(x^{*})$}
    \For{$i\leftarrow{}1$, $E_{q}$}
      \State{$H_i\longleftarrow\left.\left[\frac{\partial{}b(x)}{\partial{}x_1}\cdots\frac{\partial{}b(x)}{\partial{}x_n}\right]\right|_{x^{*}_{i}}$}
      \State{${\{\mu_{i,j}\}}_{j=1}^{n}\longleftarrow{}\text{eig}(H_{i})$}
      \If{$|\mu_{i,j}|<1$ \text{\upshape for all} $j=1,\ldots,n$}
        \State{$x_{i}^{*}\leftarrow{}x^{*}_{s}$}
      \ElsIf{$|\mu_{i,j}|<1$ \text{\upshape for some} $j=1,\ldots,n$}
        \State{$x_{i}^{*}\leftarrow{}\hat{x}^{*}$}
      \Else{}
        \State{$x^{*}\leftarrow{}x_{u}^{*}$}
      \EndIf{}
    \EndFor{}
    \State{$\hat{E}_{q}\longleftarrow{}\text{size}(\hat{x}^{*})$}
    \State{$E_{s}\longleftarrow{}\text{size}(x_{s}^{*})$}
    \State{$\phi_{+}(x)\longleftarrow\{\phi_{1}^{k_1}(x)\phi_{2}^{k_2}(x)\colon{}\mu_1^{k_1}\mu_2^{k_2}=1\}$}
    \For{$i\leftarrow{}1$, $E_{s}$}
      \For{$j\leftarrow{}1$, $\hat{E}_q$}
        \State{$W_{j}^{s}(\hat{x}_{j}^{*})\longleftarrow\{x\in{}\mathbb{R}^{n}\colon{}\phi_{+}(x)=\phi_{+}(\hat{x}_{j}^{*})\}$}
      \EndFor{}
      \State{$\partial{}R_A(x^{*}_{i})=\cup{}W^{s}(\hat{x}^{*}_j)$}
    \EndFor{}
  \EndProcedure{}
\end{algorithmic}


\section{Simulation Results}
\label{sec:simulation_results}

For testing the reliability of the methodology and algorithm, they are applied to a model of competitive exclusion with two state variables. This model has the advantage of having an analytical description of the ROA under certain parameters, and is suitable to graphically show the effect of the eigenfunction with unitary eigenvalue. The reliability of the algorithm in a higher dimensional system composed by five state variables is then demonstrated, i.e., a mass-action kinetics (MAK) model. This latter example shows the advantage of not having to calculate level sets or handle complex geometries of the (n-1)-dimensional stable manifold hyper-planes. 
The population model and the mass-action kinetics systems are suitable for the analysis because of their geometric properties. They are non-negative compartmental systems that, depending on the parameterization, have hyperbolic fixed points that satisfy the Hartman-Grobman theorem~\cite{Chellaboina2009a}. The hyperbolicity of the fixed points implies that their unstable and stable manifolds intersect transversely at the saddle points. Therefore, Assumptions (A1--A3) are satisfied in this kind of systems under the right parameterization.

The numerical integration of the systems from a random set of initial conditions gives the dataset of trajectories for the algorithm. A subset is used to calculate the Koopman operator's approximation via the EDMD algorithm (training set), and another subset is used for testing the algorithm's accuracy. The algorithm's empirical error criteria for a number $N_s$ of test trajectories and a length $K_i$ for every one of those trajectories is
\begin{equation}
	e=\frac{1}{\sum_{i=1}^{N_s}K_i}\left(\sum_{i=1}^{N_s}\sum_{k=1}^{K_i}\frac{\vert{}T^k(x_i)-U_d^kg(x_i)\vert{}}{\vert{}T^k(x_i)\vert}\right),
\end{equation} 
where this error serves to determine the best $p-q$ parameters from a sweep over the different values and to verify whether assumption B3 holds.
\subsection{A Competition Model}
\label{ssub:LV}
 Consider the following network describing the dynamics of a population model where two species compete for the same resource.
 \begin{equation}
\begin{tikzpicture}[->,>=stealth',every node/.style={text height=1.5ex,minimum width=10ex}]
  \matrix (m) [matrix of math nodes, row sep=.1cm, column sep=.8cm, column 1/.style={nodes={align=left}}, column 2/.style={nodes={align=center}},column 3/.style={nodes={align=left}}]%
{$s_1$&&$2s_1$\\%
$s_2$ &&$2s_2$\\%
$s_1$+$s_2$&&$s_1$\\%
$s_1$+$s_2$&&$s_2$\\};
  \path
  ([yshift=.5mm]m-1-1.east) edge [above] node[yshift=-.9mm] {\scriptsize$r_{1}$} ([yshift=.5mm]m-1-3.west) 
 ([yshift=-.5mm]m-1-3.west) edge [below] node[yshift=1.5mm] {\scriptsize$r_{2}$} ([yshift=-.5mm]m-1-1.east) 
  ([yshift=.5mm]m-2-1.east) edge [above] node[yshift=-.9mm] {\scriptsize$r_{3}$} ([yshift=.5mm]m-2-3.west) 
 ([yshift=-.5mm]m-2-3.west) edge [below] node[yshift=1.5mm] {\scriptsize$r_{4}$} ([yshift=-.5mm]m-2-1.east) 
  ([yshift=.5mm]m-3-1.east) edge [above] node[yshift=-.9mm] {\scriptsize$r_{5}$} ([yshift=.5mm]m-3-3.west) 
  ([yshift=.5mm]m-4-1.east) edge [above] node[yshift=-.9mm] {\scriptsize$r_{6}$} ([yshift=.5mm]m-4-3.west); 
 \end{tikzpicture}%
 	\label{eq:LV_Network}
 \end{equation}
 where $s_1$ and $s_2$ are the competing species. The dynamics of the species depends on six parameters: $r_1$ and $r_3$ describe the population growth rates of each species respectively, $r_2$ and $r_4$ represents the logistic terms accounting for the competition between members of the same species (resulting into carrying capacities of the environment $r_1/r_2$ and $r_3/r_4$, respectively), $r_5$ describes the competitive effect of species $s_2$ on species $s_1$, and $r_6$ describes the opposite competitive effect between species. The values of these constants determine the potential outcome of the competition. Depending on their values, there can be a co-existence or an exclusion of one of the species against the other. The case of interest is the exclusion one, and the objective is to find the set of initial conditions within the state space that lead the model to one of the stable points where one species completely takes over the other. The differential equations that describe the population~\eqref{eq:LV_Network} are,
 \begin{align}
  	\dot{x}_1&=r_1x_1-r_2x_1^2-r_5x_1x_2%
    \label{eq:LV_DiffEq1}\\
  	\dot{x}_2&=r_3x_2-r_4x_2^2-r_6x_1x_2,
  	\label{eq:LV_DiffEq2}
  \end{align} 
  where the parameterization for the species to have two particular asymptotically stable fixed points is $r={[2\;1\;2\;1\;3\;3]}^{\top}$. With this choice, the system has indeed four fixed points: an unstable point at the origin defined as $x_{A}^{*}$, a saddle point at $(0.5, 0.5)$ defined as $x_{D}^{*}$ and two asymptotically stable at $(0,2)$ and $(2,0)$ defined as $x_{B}^{*}$ and $x_{C}^{*}$ respectively. The geometry of this problem provides a simple representation of the {\it type-one\/} saddle stable manifold which is the line $x_1=x_2$, therefore providing a closed formulation for the comparison with the stable manifold generated by the construction of the eigenfunction with an associated eigenvalue equal to one. 

  The numerical integration of the system from 200 uniformly distributed random initial conditions, and a $\Delta{}t=0.1$ over $x_1,x_2\in[0\;2]$ gives the datasets for approximating and testing the operator via the EDMD algorithm. Each set has 50\% of the available trajectories, that adds up to $7,285$ data-points each. 

  The application of the algorithm on the training set with a Laguerre polynomial basis, and a truncation scheme sweeping over several $p-q$ values gives the best performance when $q=1.1$ and $p=3$. This selection produces a polynomial basis of 13 elements of an order equal or less than 3 and a Koopman operator of order 13. Figure~\ref{fig:figure1} shows the retained indices after implementing the truncation scheme.
\begin{figure}[ht]
  \centering
  \includegraphics[width=0.45\textwidth]{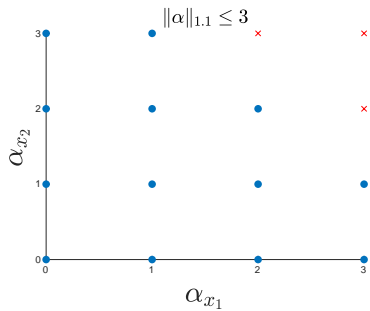}
  \caption{Retained indices for the approximation of the Koopman operator for the competition model, with a choice of $q=1.1$ and $p=3$.}%
  \label{fig:figure1}
\end{figure}

The next step of the method is the location and stability of the fixed points. Their location via \cref{lm:EquilibriumPoints} gives an absolute error of 0.15\%. Moreover, the method in \cref{sub:stability_of_equilibria} accurately provides their stability according to the  linearization of the nonlinear state evolution map~\eqref{eq:ForwardFlow} and the analysis of the eigenvalues $\lambda$ from the linearization evaluated at the identified fixed points. Table \ref{tab:NormLamDE} summarizes these results.

\begin{table*}[bht]
\centering
\caption{Competition model fixed points, location, and stability.}
\begin{tabular}{lccccc}
\toprule
          &Theoretical   & Algorithmic       & $|\lambda_1|$ & $|\lambda_2|$ & Stability \\ 
\midrule
$x_A^{*}$ & (0, 0)         & (-0.006, -0.006)   & 1.21          & 1.22          & Unstable  \\
$x_B^{*}$ & (0, 2)         & (0, 2)             & 0.66          & 0.82          & AS        \\
$x_C^{*}$ & (2, 0)         & (2, 0)             & 0.82          & 0.66          & AS        \\
$x_D^{*}$ & (0.5, 0.5)     & (0.5, 0.5)         & 0.81          & 1.10          & Saddle    \\ 
\bottomrule
\end{tabular}%
\label{tab:NormLamDE}
\end{table*}

Figure~\ref{fig:LVT} depicts the comparison between the theoretical trajectories given by the integration of the differential equations~\eqref{eq:LV_DiffEq1}--\eqref{eq:LV_DiffEq2} and the state evolution map~\eqref{eq:ForwardFlow} from the same initial conditions. It also depicts the comparison between the theoretical boundary of the attraction regions $x_1=x_2$ and the boundary given by the level set of the constructed unitary eigenfunction from~\eqref{eq:LevelSet}. The error in the classification of the initial conditions is of $2\%$, while the mean absolute error between the boundary of the regions of attraction is $3\%$.

\begin{figure}[ht]
	\centering
	\includegraphics[width=\textwidth]{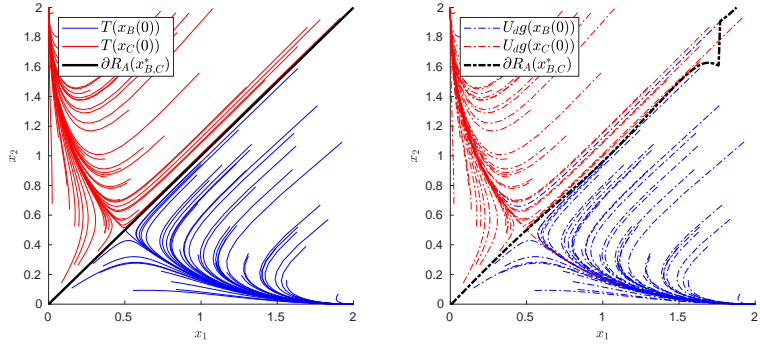}
	\caption{Trajectories and boundary of the asymptotically stable points of a. system differential equations, and b. Koopman operator and the unitary eigenfunction.}%
	\label{fig:LVT}
\end{figure}

Figure~\ref{fig:LVE} shows four eigenfunctions of the Koopman operator, including the trivial eigenfunction whose value is constant in all the state space. This eigenfunction is commonly the result of the EDMD algorithm and does not provide any useful information about the system. Therefore, it is necessary to get a nontrivial unitary eigenfunction from~\eqref{eq:eigenRelation}. To achieve this objective, setting $\mu_1^{k_1}\mu_2^{k_2}=1$ and computing a solution for $k_1$ and $k_2$ gives the desired unitary eigenfunction. However, there is a caution with this calculation as there is an infinite number of solutions for the choice of constants. In practice, the method is to set one and calculate the other.

Figure~\ref{fig:LVE} also depicts the resulting unitary eigenfunction that captures the stable manifold of the saddle point, and the two eigenfunctions with real-valued eigenvalues close to one used for the construction. The eigenvalues of these eigenfunctions are $\mu_{1}=1.07$ and $\mu_{2}=0.83$. Inspecting the near-unitary eigenfunction with $\mu_{1}=1.07$ shows that it could potentially be used for the classification on its own, and indeed, this eigenfunction does give good results, although not as accurate as the construction of the unitary eigenfunction. It is still not clear how to select the base eigenfunctions properly for the construction of the unitary eigenfunction, and although the algorithm gives accurate results when the base eigenfunctions for the construction of the unitary eigenfunction do not have real-valued eigenvalues, not having real-valued eigenfunctions for the construction does hinder the accuracy.

\begin{figure}[ht]
	\centering
  \includegraphics[width=\textwidth]{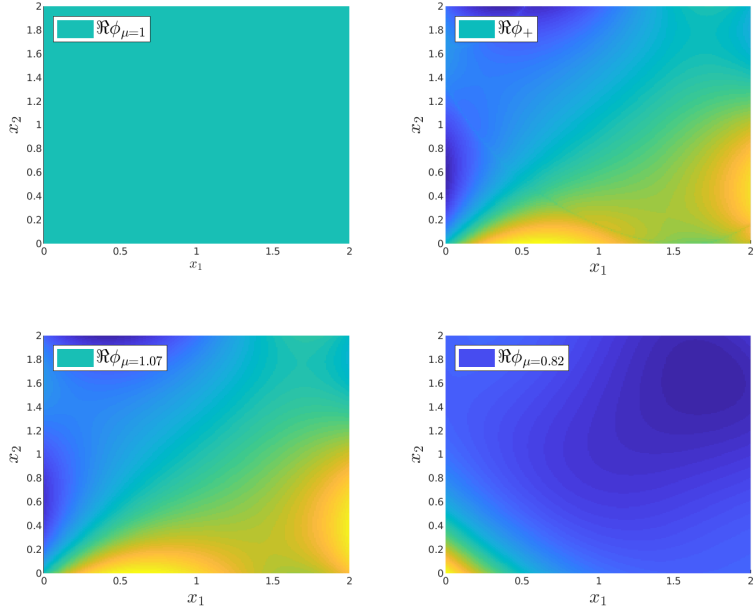}
  \caption{Eigenfunctions of the Koopman operator: a. Trivial eigenfunction with $\mu=1$, b. Constructed eigenfunction $\phi_{+}$, c. First eigenfunction for constructing $\phi_{+}$, and d. Second eigenfunction for constructing $\phi_{+}$.}%
  \label{fig:LVE}
\end{figure}

\subsection{Mass Action Kinetics}
\label{sub:mass_action_kinetics}
Consider the following network that describes an auto-catalytic replicator in a continuous flow stirred tank reactor 
\begin{equation}
\begin{tikzpicture}[->,>=stealth']
  \matrix (m) [matrix of math nodes, row sep=.3cm, column sep=.75cm, column 1/.style={nodes={align=right}}, column 2/.style={nodes={align=center}},column 3/.style={nodes={align=left}}]{$s_1$+$2s_3$&&$s_2$+$3s_3$\\$s_2$+$2s_4$&&$3s_4$\\$s_3$&&$s_5$\\$s_4$&&$s_5$\\};
  \path
  ([yshift=.5mm]m-1-1.east) edge [above] node[yshift=-.9mm] {\scriptsize$r_{1}$} ([yshift=.5mm]m-1-3.west) 
  ([yshift=.5mm]m-2-1.east) edge [above] node[yshift=-.9mm] {\scriptsize$r_{2}$} ([yshift=.5mm]m-2-3.west) 
  ([yshift=.5mm]m-3-1.east) edge [above] node[yshift=-.9mm] {\scriptsize$r_{3}$} ([yshift=.5mm]m-3-3.west) 
  ([yshift=.5mm]m-4-1.east) edge [above] node[yshift=-.9mm] {\scriptsize$r_{4}$} ([yshift=.5mm]m-4-3.west); 
 \end{tikzpicture}
\label{eq:AutocatR}
\end{equation}
where there are two species, $s_3$ and $s_4$. The species $s_3$ consumes the substrate  $s_1$ to reproduce, and as a byproduct, produces the resource  $s_2$ which is suitable for the reproduction of the second species $s_4$. Finally, $s_5$ is the dead species from both groups. The constants $r_1>r_3$ and, $r_2>r_4$ are the pairs of replication rate and the species death. Considering $d$ as the dilution rate, or the material exchange with the environment, the dynamics of the network~\eqref{eq:AutocatR} is described by the differential equations
\begin{align}
			\dot{x}_1&=-r_{1}x_{1}x_{3}^2+d-dx_{1}%
      \label{eq:SimpleDynamics1}\\ 
			\dot{x}_2&=+r_{1}x_{1}x_{3}^2-r_{2}x_{2}x_{4}^2-dx_{2}\\ 
			\dot{x}_3&=+r_{1}x_{1}x_{3}^2-r_{3}x_{3}-dx_{3}\\
			\dot{x}_4&=+r_{2}x_{2}x_{4}^2-r_{4}x_{4}-dx_{4}\\
			\dot{x}_5&=+r_{3}x_{3}+r_{4}x_{4}-dx_{5},
			\label{eq:SimpleDynamics5}	
\end{align}
where substrate $s_{1}$ is the only component in the input flow. The value for the reaction rates vector is $k={[7\:5\:0.3\:0.05]}^{\top}$, yielding five fixed points: three asymptotically stable (the working point defined as $x_{A}^{*}$ where the two species thrive and coexist, a point where species $s_3$ thrives and species $s_4$ washes-out defined as $x_{C}^{*}$, and a wash-out point where the concentration of both species is zero defined as $x_{E}^{*}$), and two saddle points defined as $x_{B}^{*}$ and $x_{D}^{*}$. The objective is to find the ROA of the working point using the unitary eigenfunction. This example highlights the contributions of this work, as the dimension of the state is 5, instead of being a two-dimensional one or a three-dimensional system that can be sliced for the analysis. The stable manifold of saddle points is a 4-dimensional hyper surface. Therefore, it has a complex geometry to analyze, and the criterion for classifying the test set of initial conditions is not trivial. Particularly, we show that it is not necessary to deal with complex geometries or higher dimensional hyper-planes to perform the classification. 

The numerical integration of the differential equations~\eqref{eq:SimpleDynamics1}--~\eqref{eq:SimpleDynamics5} from 360 randomly distributed initial conditions and $\Delta{}t=0.1$ generates the set of orbits for training and testing the algorithm. From this set of orbits, $50\%$ are used to approximate the operator via the EDMD algorithm, and the other $50\%$ to test the accuracy of the state evolution~\eqref{eq:ForwardFlow} map, and the accuracy of the classification, where each set has $79,948$ points. Additionally, sweeping over several $p-q$ values gives the best performance for the truncation scheme when $q=0.8$ and $p=4$ which leads to an approximation of the Koopman operator of order 163, and thus, a set of 163 eigenfunctions, eigenvalues, and modes. From the set of eigenfunctions, there are two eigenfunctions with real eigenvalue closest to one, which are $\mu_1=1.00008$ and $\mu_2=0.99983$. Given that these eigenvalues associated with these eigenfunctions are close enough to one, these unitary eigenfunctions are sufficient to perform the analysis.

The identification of the fixed points via \cref{lm:EquilibriumPoints} gives an absolute error of 0.15\%, and their stability according to the method in \cref{sub:stability_of_equilibria} provides an accurate description. Table~\ref{tab:FiveEqLoc} shows the results from the theoretical and algorithmic location of the fixed points, Table~\ref{tab:FiveStab} shows the results of taking the norm of the eigenvalues $\lambda$ from the Hessian matrix of~\eqref{eq:ForwardFlow} evaluated at each of the previously identified fixed points. 

\begin{table*}[ht]
\centering
\caption{Location of fixed points.}%
\label{tab:FiveEqLoc}
\begin{tabular}{lccc}
\toprule
           & Theoretical                & Algorithmic                & Error \%  \\ 
\midrule
$x_A^{*}$  & (0.23, 0.09, 0.30, 0.54, 0.59) & (0.23, 0.09, 0.30, 0.54, 0.59) & 0.08   \\
$x_B^{*}$  & (0.21, 0.67, 0.30, 0.07, 0.47) & (0.23, 0.62, 0.30, 0.11, 0.49) & 0.00   \\
$x_C^{*}$  & (0.23, 0.76, 0.30, 0.00, 0.46) & (0.23, 0.76, 0.30, 0.00, 0.46) & 0.00   \\
$x_D^{*}$  & (0.76, 0.23, 0.09, 0.00, 0.14) & (0.70, 0.30, 0.11, 0.00, 0.17) & 0.54   \\
$x_E^{*}$  & (1.00, 0.00, 0.00, 0.00, 0.00) & (1.00, 0.00, 0.00, 0.00, 0.00) & 0.00   \\
\bottomrule
\end{tabular}
\end{table*}
\begin{table*}[ht]
\centering
\caption{Stability of fixed points.}%
\label{tab:FiveStab}
\begin{tabular}{lcclllcl}
\toprule
           & $|\lambda_1|$  & $|\lambda_2|$  & $|\lambda_3|$  & $|\lambda_4|$  & $|\lambda_5|$  & Stability &                  \\ 
\midrule
$x_A^{*}$  & 0.89           & 0.98           & 0.98           & 0.97           & 0.98           & AS        & Working Point    \\
$x_B^{*}$  & 1.04           & 0.98           & 0.98           & 0.98           & 0.98           & Saddle    &                  \\
$x_C^{*}$  & 0.98           & 0.98           & 0.93           & 0.98           & 0.98           & AS        & $x_4$ Wash-out   \\
$x_D^{*}$  & 1.05           & 0.96           & 0.98           & 0.98           & 0.97           & Saddle    &                  \\
$x_E^{*}$  & 0.90           & 0.98           & 0.98           & 0.97           & 0.97           & AS        & Wash-out         \\
\bottomrule
\end{tabular}
\end{table*}

The next step in the algorithm is finding a classification scheme that does not depend on the geometry of the 4-dimensional hyperplanes that compose the stable manifold of {\it type-one\/} saddle points. For this purpose, we use a saddle classifier~\cite{Garcia-Tenorio2021b}. This type of classifier divides the set of initial conditions $x_0\in\mathcal{M}$ of the testing trajectories into a subset $X\subset\mathcal{M}$ and its complement $X'$, where the evaluation of an arbitrary initial condition with a unitary eigenfunction compared to the evaluation of the {\it type-one\/} point with the same unitary eigenfunction gives the criterion for the division
\begin{equation}
  \label{eq:unitaryDivision}
  X\subseteq\left\{x_0\in\mathbb{R}^{n}\colon\Re{\phi_+(x_0)}\geq\Re{\phi_{+}(\hat{x}^*)}\right\},
\end{equation}
where $\Re$ denotes the real part of the evaluation. This gives a simple algebraic criterion for the classification of initial conditions into their respective ROA\@. 

\Cref{fig:EigClass} depicts the results of performing the evaluation of the saddle points and the initial conditions of the test set with the unitary eigenfunctions: $\phi_{+_1}$ corresponds to the eigenfunction with $\mu_{+_1}=1.00008$ and $\phi_{+_2}$ corresponds to the eigenfunction with $\mu_{+_2}=0.99983$. The test set is carefully chosen such that there are 60 initial conditions that converge to each of the three asymptotically stable fixed points. Furthermore, the division of this set into two subsets gives one set of 90 testing initial conditions to perform the analysis, and another set of 90 initial conditions to perform the final test of the classification. Indexing over the analysis set gives the horizontal axis in \cref{fig:EigClass}. Moreover, the vertical axis shows the result of evaluating each of the 90 different initial conditions with the trivial unitary eigenfunction, $\phi_{1_+}$, where it is clear that the trivial eigenfunction does not give any information. 

The other two unitary eigenfunctions do give some useful information about the classification scheme. Using $\phi_{+_1}$, it is clear that evaluating the saddle point $\hat{x}^*_B$ is an accurate criterion for the classification of initial conditions that converge to the asymptotically stable point $x_A^*$, i.e.,
\begin{equation}
  \label{eq:xAclassifier}
  X_A\subseteq\left\{x_0\in\mathbb{R}^5\colon\phi_{+_1}(x_0)\leq\phi_{+_1}(\hat{x}^*_B)\right\}.
\end{equation}

Furthermore, performing the same evaluation process with the second unitary eigenfunction $\phi_{+_2}$, it is clear that evaluating the saddle point $\hat{x}^*_D$ is an accurate comparison criterion to classify the initial conditions that converge to the wash-out point $x^*_E$, i.e.,
\begin{equation}
  \label{eq:xEclassifier}
  X_E\subseteq\left\{x_0\in\mathbb{R}^5\colon\phi_{+_2}(x_0)\leq\phi_{+_2}(\hat{x}^*_D)\right\}.
\end{equation}

Notice that the $\Re$ notation is dropped for the two classifiers~\eqref{eq:xAclassifier} and~\eqref{eq:xEclassifier} because these are real-valued eigenfunctions. The combination of these two conditions specifies the criteria to classify the initial conditions that converge to the different asymptotically stable fixed points. Figure~\ref{fig:ClassPlot} shows the result of applying the classification criteria over the second set of testing initial conditions. Again, the horizontal axis indexes over the three groups of 30 initial conditions that converge to each of the stable points. Moreover, the vertical axis shows the classification of the initial conditions according to the classifiers, where an incorrect classification has the correct value surrounding it. The evaluation of the points with the second unitary eigenfunction is only for illustration purposes. The final test over this set yields an error of $12\%$ miss-classified points.

\begin{figure*}[ht]
  \centering
  \includegraphics[width=\textwidth]{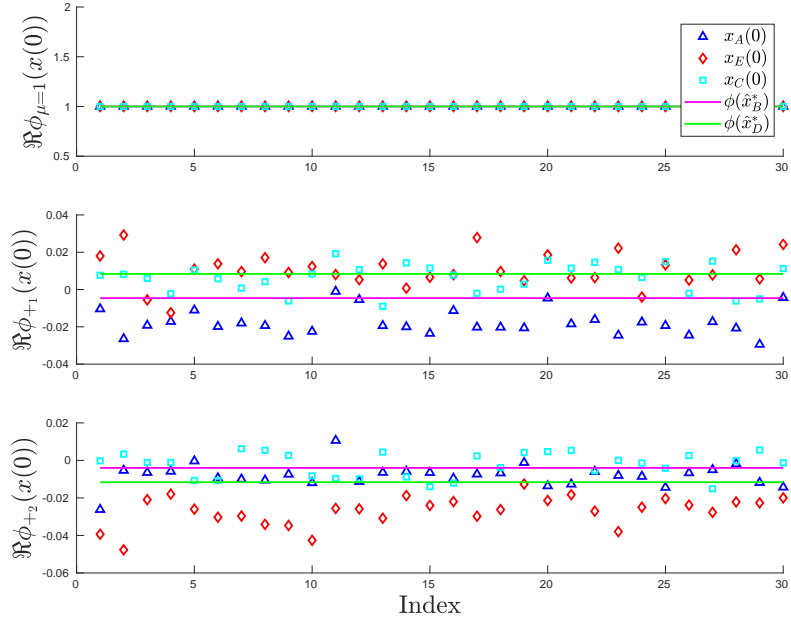}
  \caption{Eigenfunctions with unitary associated eigenvalue: a. Trivial b. $\mu>1$ c. $\mu<1$.}%
  \label{fig:EigClass}
\end{figure*}

\begin{figure*}[ht]
  \centering
  \includegraphics[width=\textwidth]{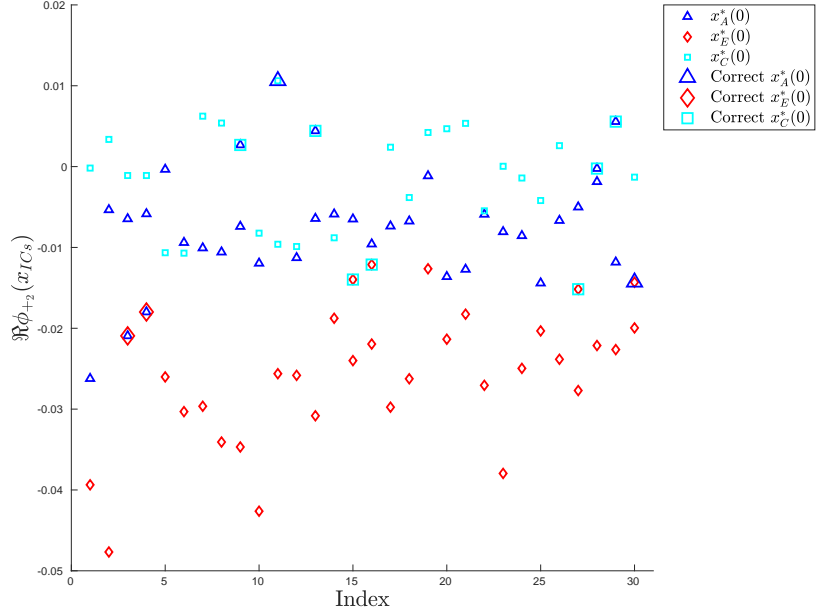}
  \caption{Classification of the initial conditions according to the evaluation of the eigenfunctions on the saddle points.}%
  \label{fig:ClassPlot}
\end{figure*}

\section{Critical Analysis and Perspectives}%
\label{sec:analysis}
Even if we had the tools for extracting all kinds of information regarding the system, there are limitations for the deduction of the eigenfunctions. The right choice of the polynomial basis based on the known structure of the differential equation, the optimal truncation scheme, the choice of eigenfunctions to construct the one with unitary associated eigenvalue are open questions for improving the algorithm and the analysis.

Furthermore, there is the problem of the amount and quality of the required data to approximate the Koopman operator accurately. For the two case studies, the amount of data is sufficient to produce an accurate nonlinear model of the systems, and the integration gives clean data devoid of noise. Moreover, to consider the proposed method over traditional identification techniques, it is necessary to develop experimental design techniques specifically for the calculation of the Koopman operator to reduce the necessary data. Additionally, the data from experimental setups is noisy, and often lacks the measurements of the whole state. Therefore, it is necessary to complement the algorithm with noise filtering observer designs to handle these issues.

Furthermore, the relation between the eigenfunctions and their associated eigenvalues is still open. There are other dynamical characteristics to be analyzed from this association, as the dynamic behavior of eigenfunctions whose associated eigenvalue matches the local spectra of the fixed points is a potential source of information of the underlying dynamical system.

A possible improvement would be for the choice or construction of eigenfunctions; all eigenfunctions capture different information and properties (relevant or not, accurate or not), and the analysis is limited to the invariant sets of these eigenfunctions. An analysis based on the spectral and geometric characteristics of real-valued and complex-valued eigenfunctions could give more information about the system for control purposes.

\section{Conclusions}
\label{sec:conclusions}
This paper deals with analysis aspects of nonlinear dynamical systems using the spectrum of the Koopman operator. Specifically, the construction of invariant sets called the unitary eigenfunctions that capture the stable manifold of {\it type-one\/} saddle points. The union of these stable manifolds is the boundary of the region of attraction of asymptotically stable fixed points, and the evaluation of these unitary eigenfunctions gives a simple algebraic criterion to determine the region of attraction of these stable points. Furthermore, all this information comes from a data-driven approach without the need for the differential equation model. Based solely on data from the system, the algorithm is able to provide the location and local stability of the fixed points of the system. The extension of these data-driven tools can provide new methods for analyzing complex systems and the ability to reformulate traditional linear controller synthesis methods to work with the discrete-time Koopman operator approximation. 

The data-driven representation of the Koopman operator provides the spectral decomposition that can complement traditional techniques for synthesis and control and provide a black-box tool for dealing with systems whose exact dynamics are challenging to model and identify.

To improve the analysis, we need to explore the polynomial structure of the system and find relationships between the system and the polynomial bases that give the approximation to optimize their selection. These relationships, coupled with more accurate eigenfunctions deduced from optimal truncation schemes, can allow the identification of eigenfunctions with additional information regarding the nonlinear dynamical system for control purposes and reduce the necessary amount and quality of the data for the approximation.



\bibliographystyle{ieeetr}
\bibliography{wileyNJD-AMA}

\begin{thebibliography}{10}

\bibitem{Chellaboina2009a}
V.~Chellaboina, S.~Bhat, W.~Haddad, and D.~Bernstein, ``{Modeling and analysis
  of mass-action kinetics},'' {\em IEEE Control Systems Magazine}, vol.~29,
  pp.~60--78, aug 2009.

\bibitem{Bomze1983}
I.~M. Bomze, ``{Lotka-Volterra equation and replicator dynamics: A
  two-dimensional classification},'' {\em Biological Cybernetics}, vol.~48,
  no.~3, pp.~201--211, 1983.

\bibitem{Khalil2015}
H.~K. Khalil, {\em {Nonlinear systems}}.
\newblock Prentice Hall, 3rd~ed., 2002.

\bibitem{chiang2015stability}
H.~D. Chiang and L.~F.~C. Alberto, {\em {Stability Regions of Nonlinear
  Dynamical Systems: Theory, Estimation, and Applications}}.
\newblock Cambridge University Press, 2015.

\bibitem{Garnier2008}
H.~Garnier and L.~Wang, eds., {\em {Identification of Continuous-time Models
  from Sampled Data}}.
\newblock Advances in Industrial Control, London: Springer-Verlag London,
  1~ed., 2008.

\bibitem{Augusiak2014}
J.~Augusiak, P.~J. {Van den Brink}, and V.~Grimm, ``{Merging validation and
  evaluation of ecological models to 'evaludation': A review of terminology and
  a practical approach},'' {\em Ecological Modelling}, vol.~280, pp.~117--128,
  2014.

\bibitem{ElKalaawy2015}
N.~ElKalaawy and A.~Wassal, ``{Methodologies for the modeling and simulation of
  biochemical networks, illustrated for signal transduction pathways: A
  primer},'' {\em BioSystems}, vol.~129, pp.~1--18, 2015.

\bibitem{Koopman1931}
B.~O. Koopman, ``{Hamiltonian Systems and Transformation in Hilbert Space},''
  {\em Proceedings of the National Academy of Sciences}, vol.~17, pp.~315--318,
  may 1931.

\bibitem{Williams2015}
M.~O. Williams, I.~G. Kevrekidis, and C.~W. Rowley, ``{A Data-Driven
  Approximation of the Koopman Operator: Extending Dynamic Mode
  Decomposition},'' {\em Journal of Nonlinear Science}, vol.~25,
  pp.~1307--1346, dec 2015.

\bibitem{Mezi2017}
I.~{Mezic}, ``{Spectrum of the Koopman Operator, Spectral Expansions in
  Functional Spaces, and State Space Geometry},'' {\em arXiv e-prints},
  p.~arXiv:1702.07597, Feb. 2017.

\bibitem{Mezic2005a}
I.~Mezic and I.~Mezi{\'{c}}, ``{Spectral Properties of Dynamical Systems ,
  Model Reduction and Decompositions},'' {\em Nonlinear Dynamics}, vol.~41,
  no.~1, pp.~309--325, 2005.

\bibitem{Lan2013}
Y.~Lan and I.~Mezi{\'{c}}, ``{Linearization in the large of nonlinear systems
  and Koopman operator spectrum},'' {\em Physica D: Nonlinear Phenomena},
  vol.~242, no.~1, pp.~42--53, 2013.

\bibitem{Mauroy2013}
A.~Mauroy, I.~Mezi{\'{c}}, and J.~Moehlis, ``{Isostables, isochrons, and
  Koopman spectrum for the action–angle representation of stable fixed point
  dynamics},'' {\em Physica D: Nonlinear Phenomena}, vol.~261, pp.~19--30, oct
  2013.

\bibitem{Mauroy2016}
A.~Mauroy and I.~Mezic, ``{Global Stability Analysis Using the Eigenfunctions
  of the Koopman Operator},'' {\em IEEE Transactions on Automatic Control},
  vol.~61, pp.~3356--3369, nov 2016.

\bibitem{Coayla-Teran2007}
E.~A. Coayla-Teran, S.~E.~A. Mohammed, and P.~R.~C. Ruffino, ``{Hartman-Grobman
  theorems along hyperbolic stationary trajectories},'' {\em Discrete and
  Continuous Dynamical Systems}, vol.~17, no.~2, pp.~281--292, 2007.

\bibitem{Klus2016}
S.~Klus, P.~Koltai, and C.~Sch{\"{u}}tte, ``{On the numerical approximation of
  the Perron-Frobenius and Koopman operator},'' {\em Journal of Computational
  Dynamics}, vol.~3, no.~1, pp.~51--79, 2016.

\bibitem{Korda2017a}
M.~Korda and I.~Mezi{\'{c}}, ``{On Convergence of Extended Dynamic Mode
  Decomposition to the Koopman Operator},'' {\em Journal of Nonlinear Science},
  vol.~28, pp.~687--710, apr 2018.

\bibitem{SCHMID2010}
P.~J. Schmid, ``{Dynamic mode decomposition of numerical and experimental
  data},'' {\em Journal of Fluid Mechanics}, vol.~656, pp.~5--28, aug 2010.

\bibitem{Koekoek2010}
R.~Koekoek, P.~A. Lesky, and R.~F. Swarttouw, {\em {Hypergeometric Orthogonal
  Polynomials and Their q-Analogues}}.
\newblock Springer Monographs in Mathematics, Berlin, Heidelberg: Springer
  Berlin Heidelberg, 2010.

\bibitem{Brunton2015}
S.~L. Brunton, J.~L. Proctor, and J.~N. Kutz, ``{Discovering governing
  equations from data by sparse identification of nonlinear dynamical
  systems},'' {\em Proceedings of the National Academy of Sciences}, vol.~113,
  pp.~3932--3937, apr 2016.

\bibitem{Konakli2016a}
K.~Konakli and B.~Sudret, ``{Reliability analysis of high-dimensional models
  using low-rank tensor approximations},'' {\em Probabilistic Engineering
  Mechanics}, vol.~46, pp.~18--36, 2016.

\bibitem{Konakli2016}
K.~Konakli and B.~Sudret, ``{Polynomial meta-models with canonical low-rank
  approximations: Numerical insights and comparison to sparse polynomial chaos
  expansions},'' {\em Journal of Computational Physics}, vol.~321,
  pp.~1144--1169, sep 2016.

\bibitem{Garcia-Tenorio2021a}
C.~Garcia-Tenorio, G.~Delansnay, E.~Mojica-Nava, and A.~Vande~Wouwer,
  ``Trigonometric embeddings in polynomial extended mode
  decomposition—experimental application to an inverted pendulum,'' {\em
  Mathematics}, vol.~9, no.~10, 2021.

\bibitem{Garcia-Tenorio2021b}
C.~Garcia-Tenorio, E.~Mojica-Nava, M.~Sbarciog, and A.~V. Wouwer, ``Analysis of
  the roa of an anaerobic digestion process via data-driven koopman operator,''
  {\em Nonlinear Engineering}, vol.~10, no.~1, pp.~109--131, 2021.

\bibitem{Garcia-Tenorio2019}
C.~Garcia-Tenorio, D.~Tellez-Castro, E.~Mojica-Nava, and A.~V. Wouwer,
  ``Analysis of a class of hyperbolic systems via data-driven koopman
  operator,'' pp.~566--571, 2019 23rd International Conference on System
  Theory, Control and Computing (ICSTCC), 2019.

\end{thebibliography}
\end{document}